\newtheorem{theorem}{Theorem}[section]
\newtheorem{lemma}{Lemma}[section]
\newtheorem{definition}{Definition}[section]
\title{On the structure of the set of self-similar
quadruples of point vortices in the plane
\footnote{This text is a direct translation from Polish
of a report titled
{\it O strukturze samopodobnych czworek wirow punktowych na plaszczyznie},
prepared in 2011 and published as an institutional report 
of the Faculty of Mechanical and Power Engineering,
Wroclaw University of Technology,
no 7 in the series
{\it Seria Raporty Inst. Inz. Lot. Proces. Masz. Energ. PWroc. 2011, Ser. PRE}.
}}
\author{Marek Kazimierz Lewkowicz}
\def\C{\mathbb{ C}}
\def\R{\mathbb{ R}}
\begin{document}
\maketitle
\pagestyle{headings}
\pagenumbering{arabic}
\section{Introduction and main results}

A point vortex is a point in the plane
with a non-zero real number, called its circulation, assigned to it.
This is an idealization of a vortex thread
in three-dimensional space,
which has the shape of a straight line.
Circulation defines the intensity
and orientation of rotation of the thread itself,
and consequently also of the surrounding space.
A vortex with coordinates $(x,y)\in\R^2$
(or equivalently $z=x+iy\in\C$)
and circulation $\kappa$
forces the points in the plane to move,
endowing a passive point $w\in\C$
with speed
$$i\kappa\frac{w-z}{|w-z|^2}.$$
Thus, the velocity vector at $w$
is perpendicular to the relative position vector.
Its length is inversely proportional to the distance between the points
and directly proportional to circulation,
while its orientation depends on the sign of circulation.

A finite system of point vortices in the plane
evolves in such a way
that the speed of each of the vortices
is sum of the speed caused by the other vortices,
in accordance with the above principle.

Evolution of vortex systems
has been studied for over 130 years
(see \cite{aref2010}, \cite{aref2011}, \cite{kudela}).
As early as 1883 Kirchhoff showed
that this movement can be described
using the formalism of Hamilton.
Quite surprisingly,
there are systems of three vortices
which during evolution
collapse to a point.
This result was already included
in the Groebli's work \cite{grobli} from 1877.
In 1979 Novikov \cite{novikov}
gave examples of systems of $n$ vortices 
collapsing to a point for $n=4$ and $5$.
His example (for $n=4$) concerns in fact
only one circulation
(see Section \ref{section_novikov} below)
and the set of collapsing systems given by him 
(after a natural completion)
can be identified with a circle in the plane.
Therefore it constitutes a single smooth simple closed curve.
In 1987, O'Neil \cite{oneil} proved
that collapsing systems exist for any $n$.
His proof works for
a variety of circulations.
O'Neil was able to prove
that at least some connected components of the set of self-similar systems 
are curves, although one cannot rule out
isolated singularities or self-intersections.
This is due to the fact that the methods used by him
belong to algebraic geometry.
They imply that the set is a one-dimensional
algebraic variety.

The aim of this study is to take a closer look at
the collection of self-similar systems
for the circulation used by Novikov.
It should be noted that O'Neil's results
do not apply directly to this particular circulation.
We introduce a numerical procedure searching for self-similar systems
and we shall look at the set it finds.
The resulting image suggests
that for this circulation
the set has five algebraic components,
each of which is a smooth
simple closed curve.
The curves seem to have
neither singularities nor self-intersections,
although one of them seems to interesect two others
in single points, so that the number of connected
components could be three.

The set found by Novikov is one of these components.
In this work we numerically find four other components.
They differ from the Novikov component 
in that they are not flat circles
in any two-dimensional affine space.
In the final part of the work 
we refer to the results of O'Neil
and point out that the existence of the components found here
is not predicted by these results.

The results below are mainly numerical.
The algebraic equation proposed in the work,
describing the set of self-similar systems,
is derived rigorously.
However, any application of the equation
to a given vortex problem
gives a numerical solution
subject to numerical error.
Therefore the results of the paper
suggesting that the set of solutions 
is a union of five smooth curves
requires more rigorous
mathematical reasoning.
We put it down to another study.

\section{Layout of the work}
The next section contains basic definitions
and facts, taken from the work of O'Neil.
I give the definition of the dynamical system
which is the subject of our interest.
I recall the concept of self-similar systems of vortices
and distinguish four specific situations,
in which a system may be invariant, translational, rotational, 
or collapsing.
I present some basic invariance properties of the problem,
related to the action of the group of affine transformations in the plane.
In particular this leads to the concept of configuration.
Configuration space has dimension
lower by four than that of the phase space.

In the following part of the work
I propose to study the set of the self-similar configurations
by representing it as a set of zeros
of some vector field $U$
(see definition \ref{pole_u}).
I prove that in fact the zeros of $U$
coincide with the self-similar configurations.
Next I propose a numerical procedure
searching for self-similar configurations
by seeking zeros of the field $U$ 
with the gradient method applied to the function $|U|^2$.
We want to apply this method to systems of four vortices.
In order to test it, we formulate it directly for three vortices.
We present the image of the set 
of self-similar points obtained with this method
for three vortices
and compare it with the well-known solution, 
which is a circle.

The next section refers to the Novikov examples 
of collapsing systems for $n=4$.
We check that the configurations described by him
form a flat circle.
Next I give the results of my numerical calculations.
I present graphic images of plane curves
obtained as projections to the coordinate system planes
of the set of self-similar configurations
for the circulation used by Novikov.
The set of configurations was obtained numerically
as the set of zeros of the vector field $U$.
In these images one can easily see five connected components.
One of them is isometric to a circle - that's the Novikov component.
The other four are the main goal of this paper.

\section{Basic definitions and facts}
For a natural number $n$,
a circulation is a sequence of non-zero real numbers
$\kappa=(\kappa_1..\kappa_n)\in\R_*^n$,
and a discrete system of $n$ point vortices
(briefly a {\it system}) is a point of the phase space
$$DV^{[n]}=\{z=(z_1,\dots,z_n)\in\C^n:\
\forall_{j\neq k}\ 
z_j\neq z_k\}.$$
According to the commonly accepted definition
(see e.g. O'Neil \cite[p. 384, def. 0.1]{oneil}, Aref \cite{aref2010}),
we consider a dynamical system
in the phase space $DV=DV^{[n]}$
defined by the vector field
$$V=V^{[n]}=(V_1,\dots,V_n),\quad
V_j=\left(\sqrt{-1}\right)
\sum_{k=1,k\neq j}^n
\kappa_k\frac{z_j-z_k}{|z_j-z_k|^2}.$$
Therefore we are interested in the trajectories of the field, 
i.e., the curves 
$t\to z(t)\in DV^{[n]}$ 
satisfying
$$z'(t)=V^{[n]}(z(t)).$$
As we know from the elementary theory of dynamical systems,
for any initial conditions
$z(0)\in DV$ 
there exists exactly one such curve,
defined in maximal domain
being an interval, bounded or not.

We distinguish the following four types of vortex systems $Z\in DV$
depending on the value of $V=V^{[n]}(Z)$.
(cf. O'Neil \cite[def. 1.1.3, p. 387]{oneil})
\begin{itemize}
	\item $Z$ is a {\it fixed point} (equilibrium) if $V_1=\dots V_n=0$.
	\item $Z$ is a {\it translational} system
if $V_1=\dots V_n=v\neq0$ for some $v\in C$. 
	\item $Z$ is {\it rotational} (relative equilibrium)
if some constants $0\neq\lambda\in R$, 
$z_0\in C$, satisfy $\forall_l\ V_l=i\lambda(Z_l-z_0)$.
	\item $Z$ is a {\it collapsing} system
if for some constants $0\neq\omega, z_0\in C$ 
we have $\forall_l\ V_l=\omega(Z_l-z_0)$
and $Re(\omega)\neq0$.
\end{itemize}

A system $Z$ is called {\it self-similar}
if, during evolution, it remains similar to the initial state.
In other words, if $z(0)=Z, z'(t)=V(z(t))$,
then for any $t$
the system $z(t)$ can be obtained from $Z$
by dilation composed with isometry.
It is easy to prove (see e.g. O'Neil \cite{oneil}), 
that $Z$ is self-similar
if and only if there is a constant $\omega\in C$,
for which
$$\forall_{j,k}\ 
(V_k-V_j)=\omega(z_k-z_j),$$
and that in turn holds if and only if
the system $Z$ belongs to one of the above listed,
mutually exclusive types.

\begin{lemma}
If a system $Z$ is collapsing
in the sense of the above definition
(i.e., $V_l(Z)=\omega(Z_l-z_0)$,
$Re(\omega)\neq0$),
then it is self-similar
and its temporal evolution is described by the formula
$$z(t)=z_0+\sqrt{2 Re(\omega)t+1}\
e^{i\frac{Im(\omega)}{2 Re(\omega)}\ln(2 Re(\omega)t+1)}(Z-z_0).$$
For a rotational system, with $V_l=i\lambda(Z_l-z_0)$, we have
$$z(t)=z_0+e^{i\lambda t}(Z-z_0).$$
Thus a rotational system rotates isometrically
around $z_0$ at a constant speed,
while for a collapsing system
all distances between the points
$z_k(t)$ tend to zero
when $t\to t_0=-\frac{1}{2 Re(\omega)t}$.
\end{lemma}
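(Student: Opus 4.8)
The plan is to reduce the whole problem to a single scalar complex ODE by exploiting the way the velocity field $V$ transforms under similarities of the plane. Self-similarity itself is almost immediate: if $V_l(Z)=\omega(Z_l-z_0)$ for all $l$, then for every pair $V_k-V_j=\omega(Z_k-Z_j)$, which is exactly the condition $(V_k-V_j)=\omega(z_k-z_j)$ characterizing self-similar systems recalled above. So the real substance of the lemma is the explicit evolution formula, and here the key observation is a covariance (scaling) property of $V$.

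First I would record how $V$ behaves under a similarity $z\mapsto w=z_0+c(z-z_0)$ with $c\in\C\setminus\{0\}$. Since $w_j-w_k=c(z_j-z_k)$ and $|w_j-w_k|^2=|c|^2|z_j-z_k|^2$, the factor $c$ in the numerator and the factor $|c|^2=c\bar c$ in the denominator combine to give $V_j(w)=\bar c^{-1}V_j(z)$. The appearance of the conjugate $\bar c$ rather than $c$ is the one genuinely delicate point, and it comes precisely from the modulus in the denominator of $V$.

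With this in hand I would make the ansatz $z(t)=z_0+c(t)(Z-z_0)$ with $c(0)=1$, so that $z_j'(t)=c'(t)(Z_j-z_0)$, while the covariance property together with $V_l(Z)=\omega(Z_l-z_0)$ gives $V_j(z(t))=\overline{c(t)}^{-1}\omega(Z_j-z_0)$. Matching the two expressions (and using that at least one $Z_j\neq z_0$) collapses the system to the single scalar equation $c'=\omega/\bar c$, $c(0)=1$. Writing $c=re^{i\theta}$ and $\omega=\mathrm{Re}(\omega)+i\,\mathrm{Im}(\omega)$ and separating real and imaginary parts yields $r r'=\mathrm{Re}(\omega)$ and $r^2\theta'=\mathrm{Im}(\omega)$; integrating with $r(0)=1$, $\theta(0)=0$ gives $r^2=2\mathrm{Re}(\omega)t+1$ and $\theta=\frac{\mathrm{Im}(\omega)}{2\mathrm{Re}(\omega)}\ln(2\mathrm{Re}(\omega)t+1)$, which is exactly the claimed formula. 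By the uniqueness of trajectories noted earlier, this curve must be the evolution of $Z$.

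Finally I would treat the rotational case $\omega=i\lambda$ (so $\mathrm{Re}(\omega)=0$) separately, where the same scalar equation becomes $c'=i\lambda/\bar c$ and is solved by $c(t)=e^{i\lambda t}$, giving the stated rigid rotation. For the collapse claim, since $|z_j(t)-z_k(t)|=r(t)\,|Z_j-Z_k|$ and $r(t)=\sqrt{2\mathrm{Re}(\omega)t+1}\to0$ as $t\to t_0=-\tfrac{1}{2\mathrm{Re}(\omega)}$, all mutual distances vanish in that limit. The only real obstacle is getting the scaling factor $\bar c^{-1}$ exactly right; once the problem is linearized into $c'=\omega/\bar c$, the rest is a routine separation-of-variables computation.
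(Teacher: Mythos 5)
Your proof is correct and takes essentially the same route as the paper: both exploit the covariance $V_j(z_0+c(z-z_0))=\bar c^{-1}V_j(z)$ to reduce the system to the single scalar equation $c'=\omega/\bar c$ (the paper translates $z_0$ to the origin and writes $c=re^{i\phi}$ directly) and then integrate $rr'=\mathrm{Re}(\omega)$, $r^2\phi'=\mathrm{Im}(\omega)$ with $r(0)=1$, $\phi(0)=0$. Your explicit appeal to uniqueness of trajectories and your tacit correction of the collapse time to $t_0=-\frac{1}{2\,\mathrm{Re}(\omega)}$ (the statement has a stray $t$ in the denominator) are minor refinements, not a different argument.
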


Notice that $t_0>0$ if $Re(\omega)<0$
and the system collapses to a point in finite time.
If $Re(\omega)>0$ then $t_0<0$.
The system collapses to a point during negative
(reversed) flow of time,
while for the positive flow of time 
the system expands to infinity.

\begin{proof}
Due to the fact that the field $V$ is translation invariant,
we can address only $Z$ such that
$$V(Z)=\omega Z.$$
I will show that there are real functions $r(t),\phi(t)$
satisfying $r(0)=1,\phi(0)=0$,
for which the curve $z(t)=r(t)e^{i\phi(t)}Z$
is an integral curve.
Due to the form of the field $V$ we have
$$V(z(t))
=\frac{e^{i\phi(t)}}{r(t)}V(Z)
=\omega\frac{e^{i\phi(t)}}{r(t)}Z.$$
We substitute this expression to the equations of motion $z'(t)=V(z(t))$
and we get
$$r'(t)+r(t)i\phi'(t)=\frac{\omega}{r(t)}.$$
It follows that
$$r'(t)=\frac{Re(\omega)}{r(t)},
\quad r^2(t)\phi'(t)=Im(\omega).$$
Of course, if $Re(\omega)=0$,
then $r(t)$ is constantly equal 1, and $\phi(t)=Im(\omega)t$.
If $Re(\omega)\neq0$, then the first equation gives
$$r^2(t)=2 Re(\omega)t+1,$$
and from the second one we have
$\phi'(t)=\frac{Im(\omega)}{2 Re(\omega)t+1}$,
so that
$$\phi(t)=\frac{Im(\omega)}{2 Re(\omega)}\ln(2 Re(\omega)t+1).$$
\end{proof}
There is a long tradition to consider self-similar systems.
One of the arguments in favor of this
is that for such systems
we can explicitly solve the equations of motion.
Another argument is as follows.
One can consider systems which collapse 
in a more general sense,
namely systems for which
the distance of two or more vortices tends to zero
in finite time during the evolution.
In the paper published in 2007 (preprint 2004)
Garduno and Lacombe \cite{garduno} showed for $n=3$
that every system collapsing in the broader sense
is self-similar,
therefore is collapsing in the sense adopted here.
The question is still open for systems of four or more vortices.

I will now discuss a few properties of self-similar systems.
This will facilitate presentation of results
and make it more transparent.
Let me start with the invariance with respect to similarities.
The group $H$ of the orientation preserving similarities of the plane
(that is, linear transformations conserving angles,
or, in other words, compositions of translations, rotations and dilations)
is a four-dimensional connected, non-compact Lie group.
Transformations belonging to $H$ 
are invertible $\C$-linear maps 
$$C\ni z\to az+b\in C,
\quad a,b\in C, a\neq0.$$
Note that this group acts freely and 2-transitively on $\C$.
In other words, for any numbers $z_1,z_2,w_1,w_2\in\C$
satisfying $z_1\neq z_2$ and $w_1\neq w_2$
there exists precisely one $\phi\in H$ such that
$\phi(z_k)=w_k,k=1,2$.
Thus, the natural action of $H$ on $DV^{[n]}$ for $n\ge2$
given by $\phi(z)_k=\phi(z_k)$
is free: if $\phi(z)=z$ for some $z$,
then $\phi=I$.
It follows that the orbits of the group $H$ 
in the phase space are four dimensional submanifolds,
form a foliation, and the quotient space $C^{[n]}=DV^{[n]}/H$,
called the configuration space,
is a manifold of dimension $2n-4$.
Next, the operation preserves the type of points
in the sense of the above definition.
For example, if $Z$ is a collapse,
i.e., it satisfies $V_k(Z)=\omega(Z_k-z_0)$,
then $W=\phi(Z)=aZ+b$ is also a collapse, namely
$$V_k(W)=V_k(aZ+b)
=\frac{a}{|a|^2}V_k(Z)
=\frac{a}{|a|^2}\omega(Z_k-z_0)=$$
$$=\frac{\omega}{|a|^2}(aZ_k-az_0)
=\frac{\omega}{|a|^2}(W_k-az_0-b).$$

Note that the constant $\omega$ 
has been rescaled by a
positive coefficient $\frac{1}{|a|^2}$.
Thus the phase is conserved.
More generally, one can show,
that if $W=\phi(Z)=aZ+b$
then the trajectories
emanating from the points $Z$ and $W$, respectively,
are coupled by $\phi$:
$$W(t)=\phi\left(Z\left(\frac{1}{|a|^2}t\right)\right).$$
It follows that it makes sense to speak
of various types of configurations,
and in particular of rotating
or collapsing configurations.

A related kind of invariance is contained in the following:
if $t\to z(t)$ 
is an integral curve for the circulation $\kappa$
and $c\in\R_*$ then $t\to w(t)=z(ct)$
is an integral curve for the circulation $c\kappa$.

In the sequel we shall be interested
(for a given circulation $\kappa$)
in the set of self-similar collapsing and rotating configurations
$$S_\kappa=\{[Z]:
\exists_{0\neq\omega,z_0}\
V(Z)=\omega(Z-z_0)\}.$$
Here, $[Z]\in\C^{[n]}$ denotes the configuration
corresponding to the system $Z$,
i.e., the set of all systems similar to $Z$.

For natural $n$, let $DU^{[n]}$
denote the following open subset of $\C^n$
$$DU^{[n]}=\{w\in\C^{n}:
\forall_k(w_k\neq 0,1),\forall_{j\neq k}\ (w_k\neq w_j)\}.$$
Notice that each orbit of the group $H$ 
acting on $DV^{[n]}$
contains precisely one point
of the form $(w,1,0)$ and $w\in DU^{[n-2]}$.
It follows that the configuration space for
$n$-vortices can be naturally identified
with the $(2n-4)$-dimensional space $DU^{[n-2]}$.

For future use, let us introduce the following vector field
\begin{definition}\label{pole_u}
For natural $n$ and for $\kappa\in\R_*^{n+2}$,
we define on the set $DU^{[n]}$
the vector field $U=U^{[n]}$ by the formula
$$U_k(w)=(\kappa_k+\kappa_{n+2})\frac{w_k}{|w_k|^2}
+\kappa_{n+1}\left(1+\frac{w_k-1}{|w_k-1|^2}\right)
-w_k\kappa_k\left(\frac{1-w_k}{|1-w_k|^2}+\frac{w_k}{|w_k|^2}\right)
-$$
$$-w_k(\kappa_{n+1}+\kappa_{n+2})
+\sum_{j\le n,j\neq k}\kappa_j\left(
\frac{w_k-w_j}{|w_k-w_j|^2}+\frac{w_j}{|w_j|^2}
-w_k\left(\frac{1-w_j}{|1-w_j|^2}+\frac{w_j}{|w_j|^2}\right)\right)
.$$
\end{definition}

\section{Self-similar systems as zeros of a vector field}
In this section I will prove
that the task of determining all self-similar systems
is equivalent to determination of all zeros
of the vector field $U$ introduced above.

For the curve $z(t)$ in the space $DV^{[n]}$
we consider the family of affine transformations
$$\C\ni z\to\phi_t(z)
=\frac{z-z_n(t)}{z_{n-1}(t)-z_n(t)}\in\C.$$
If each point $z(t)$
is transformed by the similarity $\phi_t$ (depending on $t$),
we get a curve, whose last two coordinates
are always equal to 1 and 0. 
It is therefore a curve of the form $(w(t),1,0)$,
where
$$w_k(t)=\phi_t(z_k(t))=\frac{z_k(t)-z_n(t)}{z_{n-1}(t)-z_n(t)}$$
for $k\le n-2$. The curve $w(t)$ has values in $DU^{[n-2]}$.

Let $\kappa=(\kappa_j)_{j=1..n}$ be a fixed circulation
and $U=U^{[n-2]}$ be the vector field on $DU^{[n-2]}$ defined above.
\begin{theorem}
If $z(t)$ is an integral curve of the field $V^{[n]}$
then the curve $t\to w(t)\in DU^{[n-2]}$ satisfies the equation
$$w_k'(t)
=\frac{iU(w(t))}{|z_{n-1}(t)-z_n(t)|^2}.$$
\end{theorem}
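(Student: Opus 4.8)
The plan is to differentiate $w_k(t)=\frac{z_k-z_n}{z_{n-1}-z_n}$ directly and show that, after inserting the equations of motion and rewriting everything in the moving frame $\phi_t$, the result is exactly $\frac{i}{|z_{n-1}-z_n|^2}U_k(w)$ (on the right the theorem should read $U_k$, the $k$-th component of the field $U$). Writing $a(t)=z_{n-1}(t)-z_n(t)$, we have $w_k=\frac{z_k-z_n}{a}$ and $z_k-z_n=w_k a$, while the two pinned coordinates $w_{n-1}\equiv1$ and $w_n\equiv0$ are constant in $t$. Applying the quotient rule and substituting $z_j'=V_j(z)$ gives
$$w_k'=\frac{(z_k'-z_n')a-(z_k-z_n)(z_{n-1}'-z_n')}{a^2}=\frac{(V_k-V_n)-w_k(V_{n-1}-V_n)}{a},$$
so the problem reduces to expressing the velocity differences $V_k-V_n$ and $V_{n-1}-V_n$ in the $w$-coordinates.

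The key step is the equivariance of the kernel. Using $\frac{z}{|z|^2}=\frac1{\bar z}$ and $z_j-z_l=a(w_j-w_l)$, each summand of $V_j$ transforms as $\kappa_l\frac{z_j-z_l}{|z_j-z_l|^2}=\frac{\kappa_l}{\bar a}\frac{w_j-w_l}{|w_j-w_l|^2}$, so that $V_j=\frac{i}{\bar a}\hat V_j$ with $\hat V_j=\sum_{l\neq j}\kappa_l\frac{w_j-w_l}{|w_j-w_l|^2}$. Substituting into the displayed formula produces the common factor $\frac1a\cdot\frac{i}{\bar a}=\frac{i}{|a|^2}=\frac{i}{|z_{n-1}-z_n|^2}$, which matches the claimed prefactor. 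Here I would flag the one genuinely delicate point: the factor $\frac1a$ coming from the quotient rule is holomorphic while the factor $\frac1{\bar a}$ coming from the kernel is anti-holomorphic, and it is only their product $\frac1{|a|^2}$ that is frame-independent, as it must be for a well-defined field on $DU^{[n-2]}$. It remains to prove the algebraic identity $U_k(w)=(\hat V_k-\hat V_n)-w_k(\hat V_{n-1}-\hat V_n)$.

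To verify this I would write out $\hat V_k$, $\hat V_n$ and $\hat V_{n-1}$ using $w_{n-1}=1$, $w_n=0$, splitting off the $j=k$ contribution from each sum. The difference $\hat V_k-\hat V_n$ then yields the terms $(\kappa_k+\kappa_n)\frac{w_k}{|w_k|^2}$ and $\kappa_{n-1}\bigl(1+\frac{w_k-1}{|w_k-1|^2}\bigr)$ of $U_k$ together with the $\frac{w_k-w_j}{|w_k-w_j|^2}+\frac{w_j}{|w_j|^2}$ part of the sum, while $-w_k(\hat V_{n-1}-\hat V_n)$ supplies the two remaining inhomogeneous terms $-w_k\kappa_k\bigl(\frac{1-w_k}{|1-w_k|^2}+\frac{w_k}{|w_k|^2}\bigr)$ and $-w_k(\kappa_{n-1}+\kappa_n)$, as well as the $-w_k\bigl(\frac{1-w_j}{|1-w_j|^2}+\frac{w_j}{|w_j|^2}\bigr)$ part of the sum. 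Collecting terms reproduces Definition~\ref{pole_u} verbatim. I expect no conceptual obstacle in this final stage; the main risk is a bookkeeping slip in the term-by-term matching, and the whole computation hinges on consistently using the normalisation $w_{n-1}=1$, $w_n=0$, which is precisely what singles out the special roles played by $\kappa_{n-1}$ and $\kappa_n$ in the definition of $U$.
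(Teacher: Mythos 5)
Your proposal is correct and takes essentially the same route as the paper's proof: quotient rule on $w_k=(z_k-z_n)/(z_{n-1}-z_n)$, the equations of motion inserted as velocity differences, the kernel scaling $\frac{z}{|z|^2}=\frac{1}{\bar z}$ (which you package as $V_j=\frac{i}{\bar a}\hat V_j$ and the paper effects by substituting $z_{kj}=(w_k-w_j)\zeta$ into the difference formula), and the normalization $w_{n-1}=1$, $w_n=0$ to recover Definition \ref{pole_u}, with your term-by-term matching, including splitting off the $j=k$ term of $-w_k(\hat V_{n-1}-\hat V_n)$, mirroring the paper's final display exactly. Your observation that the holomorphic factor $1/a$ and the antiholomorphic factor $1/\bar a$ combine into the frame-independent $1/|a|^2$ is precisely what underlies the paper's passage from $-i\zeta^2 w_k'$ to $-i|\zeta|^2 w_k'$, and your note that the statement should read $U_k$ (the $k$-th component) is a fair correction of a typo.
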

\begin{proof}
Denote $z_{jk}=z_j-z_k$ and $\zeta=z_{n-1}-z_n$.
Since $z(t)$ is an integral curve of $V$,
$$z_k'=i\sum_{j\neq k}\kappa_j\frac{z_{kj}}{|z_{kj}|^2}.$$
Hence
$$-iz_{kp}'
=(\kappa_k+\kappa_p)\frac{z_{kp}}{|z_{kp}|^2}
+\sum_{j\neq k,p}\kappa_j
\left(\frac{z_{kj}}{|z_{kj}|^2}
+\frac{z_{jp}}{|z_{jp}|^2}\right).$$
Furthermore $w_k=z_{kn}/\zeta$, so that
$$-i\zeta^2w_k'
=-i\zeta^2\frac{z_{kn}'\zeta-z_{kn}\zeta'}{\zeta^2}
=-iz_{kn}'\zeta-z_{kn}(-i\zeta')=$$
$$=\left(
(\kappa_k+\kappa_n)\frac{z_{kn}}{|z_{kn}|^2}
+\sum_{j\neq k,n}\kappa_j
\left(\frac{z_{kj}}{|z_{kj}|^2}
+\frac{z_{jn}}{|z_{jn}|^2}\right)\right)\zeta+$$
$$-z_{kn}\left((\kappa_{n-1}+\kappa_n)\frac{\zeta}{|\zeta|^2}
+\sum_{j\le n-2}\kappa_j\left(\frac{z_{n-1,j}}{|z_{n-1,j}|^2}
+\frac{z_{jn}}{|z_{jn}|^2}\right)\right).$$
Npw we use
$$z_{kn}=w_k\zeta,
\quad\quad z_{kj}=z_{kn}-z_{jn}=(w_k-w_j)\zeta.$$
Therefore
$$-i|\zeta|^2w_k'=(\kappa_k+\kappa_n)\frac{w_k-w_n}{|w_k-w_n|^2}
+\sum_{j\neq k,n}\kappa_j\left(\frac{w_k-w_j}{|w_k-w_j|^2}
+\frac{w_j-w_n}{|w_j-w_n|^2}\right)+$$
$$-(w_k-w_n)\left(\kappa_{n-1}+\kappa_n
+\sum_{j\le n-2}\kappa_j\left(\frac{w_{n-1}-w_j}{|w_{n-1}-w_j|^2}
+\frac{w_j-w_n}{|w_j-w_n|^2}\right)\right)=$$
$$=(\kappa_k+\kappa_n)\frac{w_k}{|w_k|^2}
+\kappa_{n-1}\left(1+\frac{w_k-1}{|w_k-1|^2}\right)
-w_k\kappa_k\left(\frac{1-w_k}{|1-w_k|^2}+\frac{w_k}{|w_k|^2}\right)-
$$
$$-w_k(\kappa_{n-1}+\kappa_n)+\sum_{j\le n-2,j\neq k}\kappa_j\left(
\frac{w_k-w_j}{|w_k-w_j|^2}+\frac{w_j}{|w_j|^2}
-w_k\left(\frac{1-w_j}{|1-w_j|^2}+\frac{w_j}{|w_j|^2}\right)\right)
.$$
\end{proof}

As I mentioned previously,
if a self-similar system is transformed by a similarity, 
we get a self-similar system again.
In other words, 
self-similarity is a characteristic property
and can be tested in the configuration space.
It follows that in order to determine
all self-similar systems,
it is enough to determine those self-similar systems $z$,
for which $z_{n-1}=1$, $z_n=0$.
Each of the self-similar systems found
will give a four space of self-similar systems,
namely the orbit under the action of the group $H$.

In the remaining part of this section
I shall therefore deal only with self-similar systems
$z\in DV^{[n]}$ of the form $z=(w,1,0)$,
$w\in DU^{[n-2]}$.

\begin{theorem}\label{tw_o_zerach}
For $w=(w_1,\dots w_{n-2})\in DU^{[n-2]}$,
a system of $n$ vortices $z=(w,1,0)\in DV^{[n]}$
is self-similar if and only if 
$w$ is a zero of the vector field $U$ 
defined in the previous section.
\end{theorem}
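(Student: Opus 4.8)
The plan is to reduce the self-similarity of $z=(w,1,0)$ to a single algebraic condition evaluated at $t=0$, and then to recognize that condition as the vanishing of $U$. The bridge is the preceding theorem, combined with the observation that self-similarity, as characterized in the criterion recalled above, is a pointwise condition on the pair $(Z,V(Z))$ rather than a condition on the entire trajectory. This lets me avoid any dynamical argument and work purely at the initial instant.

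First I would specialize the preceding theorem to the initial point. Since $z(0)=(w,1,0)$, the quantity $\zeta(0)=z_{n-1}(0)-z_n(0)=1$, so $|\zeta(0)|^2=1$ and the relation $w_k'(t)=iU_k(w(t))/|\zeta(t)|^2$ collapses to $w_k'(0)=iU_k(w)$. Consequently $U(w)=0$ if and only if $w'(0)=0$, and it suffices to prove that $z=(w,1,0)$ is self-similar if and only if $w'(0)=0$.

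Next I would compute $w'(0)$ directly from $w_k=(z_k-z_n)/(z_{n-1}-z_n)$ by the quotient rule, without invoking the explicit form of the field. Evaluating at $t=0$, where the denominator equals $1$ and $z_k(0)-z_n(0)=w_k$, and using $z'(0)=V(z(0))=V$, I obtain
$$w_k'(0)=(V_k-V_n)-w_k(V_{n-1}-V_n),\qquad k\le n-2.$$
Thus $w'(0)=0$ is exactly the system of equations $(V_k-V_n)=w_k(V_{n-1}-V_n)$ for $k\le n-2$.

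Finally I would match this with the self-similarity criterion $\exists_\omega\ \forall_{j,k}\ (V_k-V_j)=\omega(z_k-z_j)$, taking $\omega=V_{n-1}-V_n$. If $z$ is self-similar then $(V_k-V_n)=\omega w_k$ and $(V_{n-1}-V_n)=\omega$, so every $w_k'(0)$ vanishes. Conversely, if $w'(0)=0$ then $(V_k-V_n)=(z_k-z_n)\omega$ holds for $k\le n-2$, and it holds trivially for $k=n-1$ and $k=n$; subtracting two such relations gives $(V_k-V_j)=\omega(z_k-z_j)$ for all $j,k$, which is self-similarity. The only point requiring care is conceptual rather than computational: one must resist arguing the converse through a claim that the normalized configuration $w(t)$ stays frozen for all $t$, which would require a trajectory-level argument; because the recalled criterion makes self-similarity a condition at the single system $Z$, evaluating everything at $t=0$ already suffices. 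No substantial obstacle remains, since the heavy algebra identifying the right-hand side with $U_k$ was already carried out in the proof of the preceding theorem.
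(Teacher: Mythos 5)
Your proposal is correct, but it takes a genuinely different route from the paper's. The paper argues at the level of trajectories: it notes that the normalized curve $(w(t),1,0)$ consists of systems sharing their last two coordinates, so by the freeness of the $2$-transitive similarity action these systems are mutually similar only if $w(t)$ is constant; self-similarity of the trajectory is thus equivalent to constancy of $w(t)$, which via the equation $w_k'=iU_k(w)/|z_{n-1}-z_n|^2$ from the preceding theorem is read off as the vanishing of $U$. You instead work entirely at $t=0$: you compute $w_k'(0)=(V_k-V_n)-w_k(V_{n-1}-V_n)$ by the quotient rule, invoke the preceding theorem only to identify $w'(0)=iU(w)$ (using $\zeta(0)=1$), and match the vanishing of $w'(0)$ against the recalled pointwise criterion $\exists_\omega\ \forall_{j,k}\ (V_k-V_j)=\omega(z_k-z_j)$ with the explicit choice $\omega=V_{n-1}-V_n$; your verification of both directions, including the reduction of general pairs $(j,k)$ to pairs $(k,n)$ by subtraction, is complete and correct. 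As for what each approach buys: the paper's argument is shorter and does not need to exhibit $\omega$, but its converse direction (that $U(w)=0$ forces $w(t)$ to stay constant) tacitly requires an ODE-uniqueness argument --- for instance reparametrizing time by $d\tau=dt/|\zeta(t)|^2$ so that $w$ solves the autonomous field $iU$ --- which the paper does not spell out; your pointwise argument sidesteps exactly this step, as you explicitly observe, at the cost of leaning on the equivalence between self-similarity and the algebraic criterion, which the paper states with a citation to O'Neil rather than proving. Both dependencies are legitimate within the paper's framework, and your version additionally produces the phase $\omega=V_{n-1}-V_n$ as a by-product.
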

\begin{proof}
The trajectory $z(t)$ of the field $V$
emerging from $z=(w,1,0)$ is self-similar
if and only if the curve $(w(t),1,0)\in C^n$,
defined in the proof above, is self-similar.
However, any two systems of the form $(w(t),1,0)$
have two coordinates equal,
so they are similar only if the trajectory 
$w(t)\in C^{n-2}$
is constant.
According to the formula form the Theorem,
this occurs when $w$ is a zero of the field $U$,
because $z_n(t)-z_{n-1}(t)\neq0$.
\end{proof}

\section{Systems of three vortices}\label{trzywiry}
The above theorem will be applied for $n=4$,
but for control purposes we apply it to $n=3$,
where the result is well known.

As we know, self-similar collapsing systems
may exist only for circulations $\kappa$, 
for which $L=\sum_{l<j}\kappa_l\kappa_j=0$.
Taking homogeneity into account, we can consider only circulations
with $\kappa_1=-1$.
For the circulation $\kappa=(-1,\kappa_2,\kappa_3)$
the condition $L=0$ gives
$$\kappa_2\kappa_3=\kappa_2+\kappa_3.$$
Therefore, circulations under consideration
depend only on one parameter $\lambda$
and one can take
$$\kappa_1=-1,\quad \kappa_2=\lambda,
\quad \kappa_3=\frac{\lambda}{\lambda-1}.$$
Our claim is that the point $(w,1,0)$
is self-similar if and only if it nullifies the expression
$$U(w)=(\kappa_1+\kappa_3)\frac{w}{|w|^2}
+\kappa_2\left(1+\frac{w-1}{|w-1|^2}\right)
-w\kappa_1\left(\frac{1-w}{|1-w|^2}+\frac{w}{|w|^2}\right)
-w(\kappa_2+\kappa_3)=$$
$$=\frac{1}{\lambda-1}\frac{w}{|w|^2}
+\lambda\left(1+\frac{w-1}{|w-1|^2}\right)
+w\left(\frac{1-w}{|1-w|^2}+\frac{w}{|w|^2}\right)
-w\frac{\lambda^2}{\lambda-1}.$$
Myltiplying both sides by
$(\lambda-1)\bar{w}(\bar{w}-1)$
we get
$$\bar{w}-1
+\lambda(\lambda-1)\bar{w}^2
-(\lambda-1)w
-w\bar{w}(\bar{w}-1)\lambda^2=0.$$
Now we substitute $w=x+yi$:
$$x-yi-1+\lambda(\lambda-1)(x^2-y^2-2xyi)
-(\lambda-1)(x+yi)-(x^2+y^2)(x-yi-1)\lambda^2=0$$
and take the imaginary part:
$$-y-2\lambda(\lambda-1)xy
-(\lambda-1)y+(x^2+y^2)y\lambda^2=0.$$
After dividing by $y\lambda^2$, this gives
$$(x^2+y^2)-2x\frac{\lambda-1}{\lambda}-\frac{1}{\lambda}
=\left(x-\frac{\lambda-1}{\lambda}\right)^2+y^2
-\left(\frac{\lambda-1}{\lambda}\right)^2-\frac{1}{\lambda}
=0.$$
Therefore
$$\left(x-\frac{\lambda-1}{\lambda}\right)^2+y^2
=\left(\frac{\lambda-1}{\lambda}\right)^2+\frac{1}{\lambda}
=\frac{\lambda^2-\lambda+1}{\lambda^2}.$$
This is an equation of a circle, 
since the determinant of the quadratic function
$\lambda^2-\lambda+1$ is negative.

Similarly, the real part
$$x-1+\lambda(\lambda-1)(x^2-y^2)
-(\lambda-1)x-(x^2+y^2)(x-1)\lambda^2$$
can be decomposed 
$$(\lambda x^2-2(\lambda-1)x+\lambda y^2-1)(1-\lambda x).$$
Finally, the solution set consists
of the point $(\frac{1}{\lambda},0)$
and a circle centered at $(\frac{\lambda-1}{\lambda},0)$
with radius
$\sqrt{\frac{\lambda^2-\lambda+1}{\lambda^2}}$.
This is consistent with the formulas in Aref's work \cite{aref2010}.
He says that if two vortices a placed at the points
$$\left(\frac{\kappa_3}{\kappa_3+\kappa_2},0\right),
\quad
\left(-\frac{\kappa_2}{\kappa_3+\kappa_2},0\right),$$
then the third one is in the circle
with center $(0,0)$ and radius
$\sqrt{\frac{\kappa_1+\kappa_2+\kappa_3}{\kappa_2+\kappa_3}}$.
Substitution of our circulations gives
the vortices positions
$$\left(\frac{1}{\lambda},0\right),
\quad\left(-\frac{\lambda-1}{\lambda},0\right),$$
and the radius
$$\sqrt{\frac{\lambda^2-\lambda+1}{\lambda^2}}.$$
It turns out that the radii are the same,
but the two vortex positions obtained by Aref
as well as the center of the circle
are translated with respect to ours
by a vector $\left[1-\frac{1}{\lambda},0\right]$.

It is well known
that the circle contains both collapsing and rotating points,
while the isolated point is a fixed point.

\section{A numerical procedure}\label{procedura}

It is clear from Theorem \ref{tw_o_zerach}
that in order to find self-similar rotating and collapsing
systems of $n$ vortices,
we have to find 
zeros of the vector field $U$ defined
in the space
$DU^{[n-2]}$ 
of dimension $2n-4$
by Definition \ref{pole_u}.
We shall use a simple numerical approach.

Zeros of the vector field $U$
are zeros (and hence the absolute minima)
of the non-negative smooth function
$$DU\ni w\to f(w)=|U(w)|^2\in\R^+.$$
We search the minima with a simple gradient method.
The parameters of the procedure are
the starting point $P$, the initial time step $dt$,
the desired accuracy $\epsilon$, and the maximum number of steps $N$.
The procedure calculates repeatedly (in a loop)
the gradient at the current point $p$,
the shifted point $q=p-\nabla fdt$,
and the value of the function at $q$.
This value is compared with the value at the point $p$.
If the new value is smaller,
then $q$ becomes the current point.
If not, then we halve the step.
We stop when the value of the function
falls below a predetermined error $\epsilon$
or after the maximum number of steps $N$ has been reached.
If, after the shutdown of the procedure,
the function value is not less than $\epsilon$,
we believe that the point we found
is a minimum which is not an absolute minimum 
(i.e., not a zero of the vector field $U$)
and therefore the point is discarded.

The gradient can be calculated 
by approximating the derivative 
by the difference ratio
$$\frac{df}{dx}
\approx\frac{f(x+dx)-f(x)}{dx}$$
for a small increment $dx$,
but if the function is a polynomial
(or, more generally, is analytic)
it is better to use the derivative
calculated analytically.
This is particularly important
when we are near the minimum point.

The procedure is applied as follows.
In some preferred region of $DU$
(e.g., in a box), we choose 
a large number of random starting points.
For each of them the gradient procedure is executed.
The absolute minima found form a discrete collection, 
whose graphic image is supposed
to provide information on the properties
of the set of self-similar rotating and collapsing systems.

\section{Graphic images of systems for n=3}
For $n=3$ the set of self-similar systems is a well known circle,
as discussed above in Example \ref{trzywiry}.
The set can be obtained numerically
by searching zeros of the vector field indicated above.

\begin{figure}[ht]
\caption{Gradient lines of $|U(w)|^2$ 
convergent to the self-similar collapsing configurations of three vortices.
The gradient is calculated numerically.}
\includegraphics[width=1.00\textwidth]{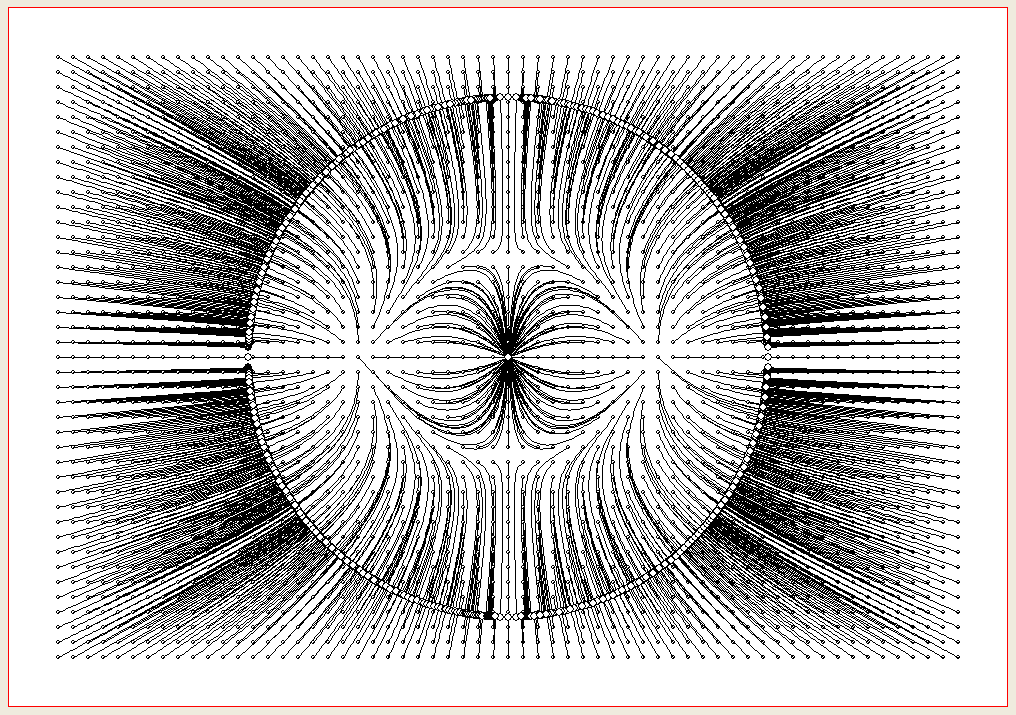}
\end{figure}

The calculations presented graphically in Fig. 1
are performed for the circulation $\kappa=(-1,2,2)$.
According to \cite{aref2010} (as well as to the calculations
from Section \ref{trzywiry}),
if two vortices are placed at the points $(0,0)$ and $(1,0)$,
then the third one point in the circle
with center $(1/2,0)$ and radius $\sqrt{3}/2\approx0.866$.
Thus, it intersects the $Ox$ axis at points $(-0.366)$ i $(1.386,0)$.

Fig. 1 contains the image of the gradient lines
of the function $|U(w)|^2$ in this case.
The gradient was calculated as
the difference ratio with $dx=10^{-6}$.
Starting points for the gradient procedure
(indicated in the figure with small circles)
are selected from the rectangle $[-1,2]\times[-1,1]$
in a regular grid of rectangular step 0.05.
Gradient lines originating at these points
converge to the minima
(slightly larger circles).
Except for the fixed point $(0.5,0)$,
they are rotating and collapsing systems
lying on the said circle.

\begin{figure}[ht]
\caption{Gradient lines of $|U(w)|^2|w|^2|w-1|^2$
convergent to the self-similar collapsing configurations of three vortices.
The gradient is calculated analytically.}
\includegraphics[width=1.00\textwidth]{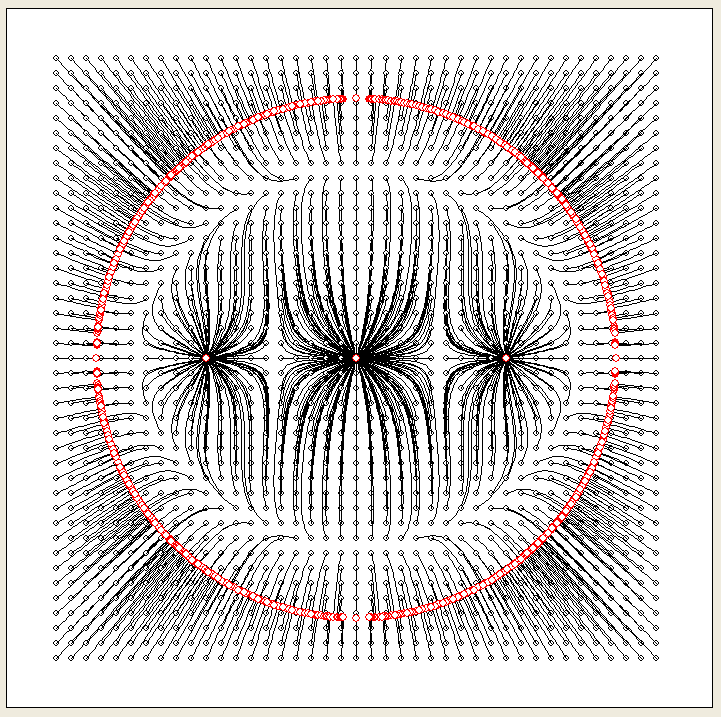}
\end{figure}

Fig. 2 shows a similar picture 
for the same circulation, step, and data grid.
The main difference lies in the fact that the minimum was counted
for the function $|U|^2$ multiplied by $|w|^2|w-1|^2$.
Therefore, the graph contains two false minima
at $(0,0)$ and $(1,0)$.
Other minima (absolute) are unchanged
because multiplying the studied function by a positive function
does not change its zeros.
In this representation, the minimized function is a polynomial
and the gradient was calculated from an analytical formula.

\section{Novikov examples for n=4}\label{section_novikov}
In 1979, Novikov gave in \cite{novikov} examples 
of collapsing systems for $n=4$.
In these examples, four points must be
the vertices of a parallelogram.
Vorticity of opposite corners must be equal.

\begin{figure}[ht]
\caption{Projection of the set of self-similar 
configurations of four vortices to the 1-2 plane.}
\includegraphics[width=0.80\textwidth]{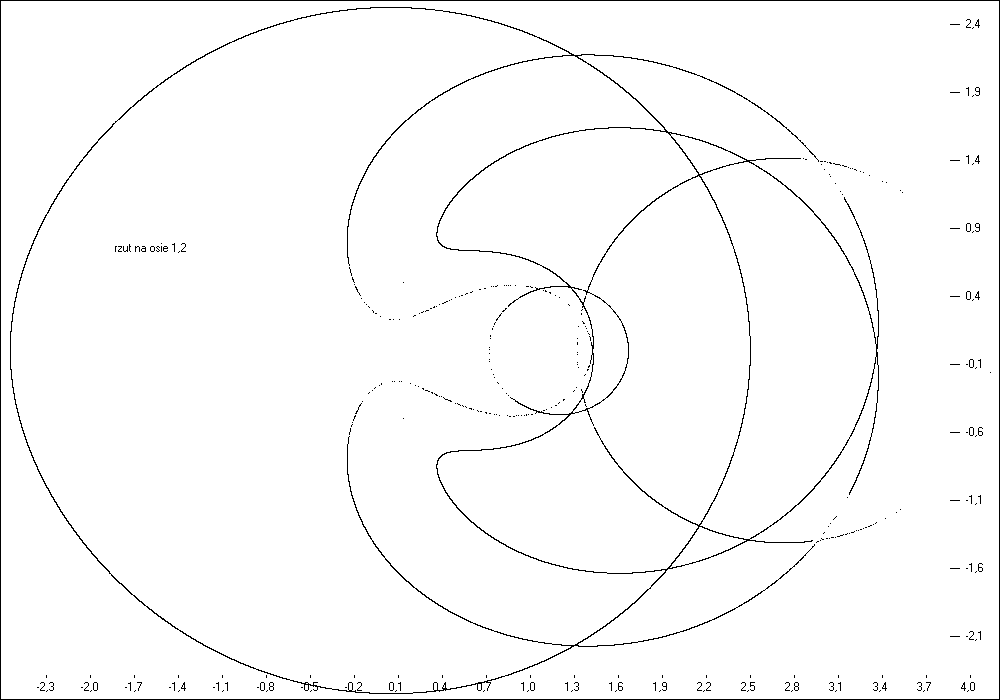}
\end{figure}

\begin{figure}[ht]
\caption{Projection of the set of self-similar 
configurations of four vortices to the 1-3 plane.}
\includegraphics[width=0.80\textwidth]{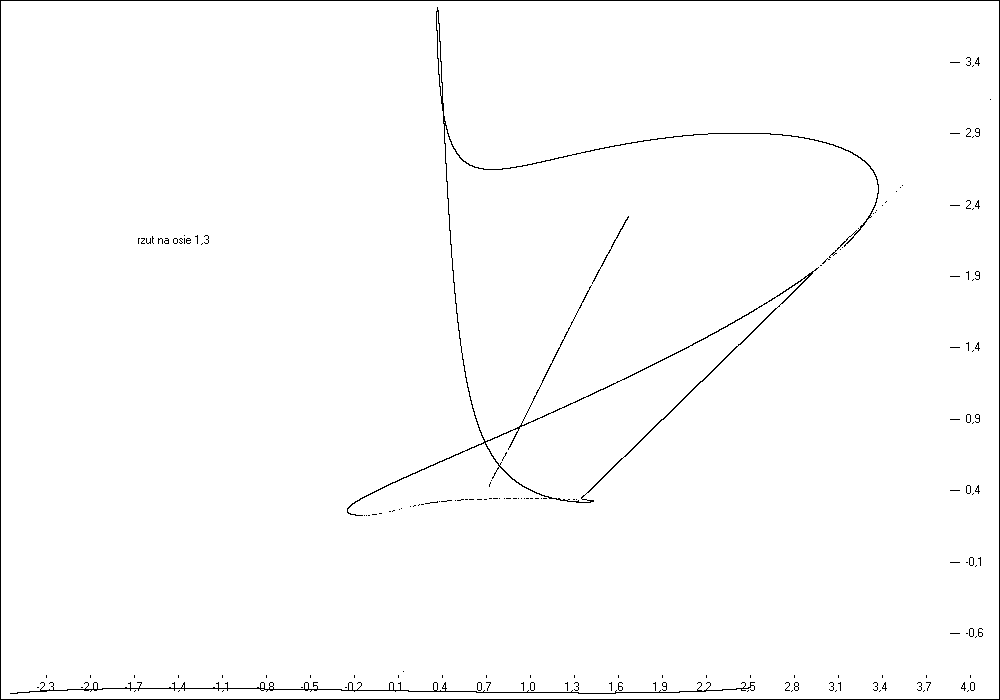}
\end{figure}

\begin{figure}[ht]
\caption{Projection of the set of self-similar 
configurations of four vortices to the 1-4 plane.}
\includegraphics[width=0.80\textwidth]{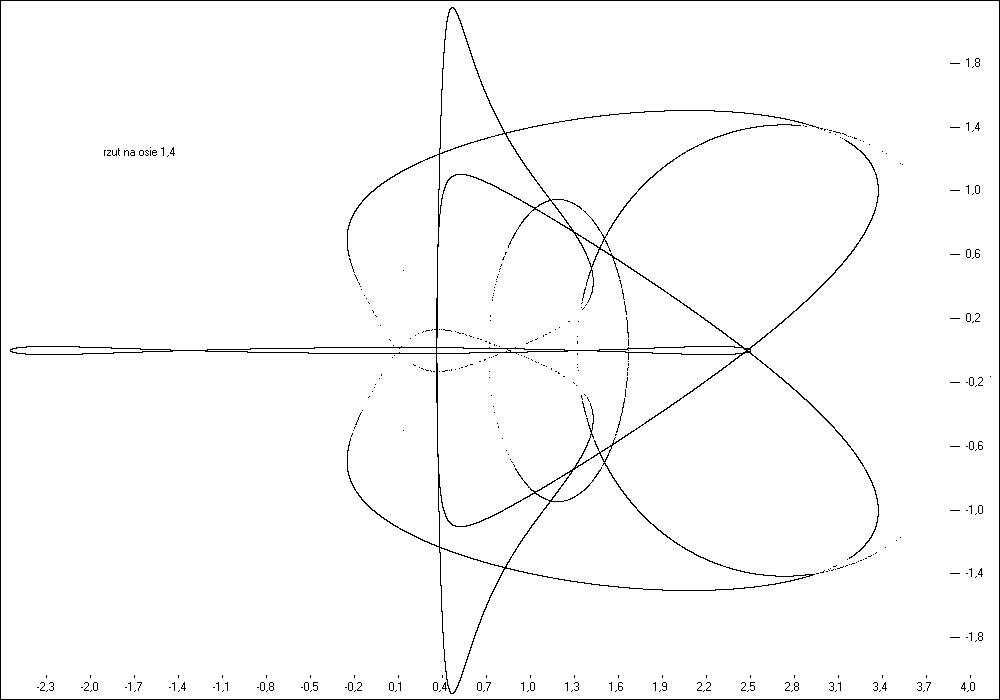}
\end{figure}

Thanks to the invariance under the action of similarities
and the homogeneity of vorticity, we can assume
that the points and their vorticity are as follows
$$(w+1,w,1,0)\in\C^4,
\quad (1,\lambda,\lambda,1)\in\R_*^4.$$
The necessary condition for
the existence of collapsing systems
$$L=\sum\kappa_i\kappa_j=\lambda^2+4\lambda+1=0$$
has to be satisfied. Therefore $\lambda$
can take on one of two values $\lambda_\pm=-(2\pm\sqrt3)$.

According to Novikov, the sum of products 
of squared lengths of the diagonals
multiplied by the intensity of any end-point of the diagonal must be zero.
Here, this means that
$$1\cdot|(w+1)-0|^2+\lambda\cdot|w-1|^2=0.$$
Thus the ratio of the length of the diagonal squares
should be $-\lambda$:
$$\frac{|w+1|^2}{|w-1|^2}=-\lambda.$$

For $\lambda=\lambda_+=-(2+\sqrt3)$ this means that
$$\frac{|w+1|^2}{|w-1|^2}=2+\sqrt3.$$
Denoting $w=x+iy$ we have
$$(2+\sqrt3)((x-1)^2+y^2)=(x+1)^2+y^2,$$
$$(1+\sqrt3)(x^2+y^2+1)=2(3+\sqrt3)x,$$
$$x^2+y^2+1
=2\frac{3+\sqrt3}{1+\sqrt3}x
=2\frac{(3+\sqrt3)(\sqrt3-1)}{2}x
=2\sqrt3 x,$$
$$(x-\sqrt3)^2+y^2=2.$$

The set of solutions
in the four-dimensional configuration space
is isometrically an ellipse 
lying in the plane 
$w_1=w_2+1$.
Its projection onto each of the two complex axes 
$Ow_1$, $Ow_2$,
is a circle.
This ellipse can be described parametrically by
$$\theta\to(1+\sqrt3+\sqrt2e^{i\theta},\sqrt3+\sqrt2e^{i\theta}).$$

Let us look at the set of self-similar configurations 
for the second value, i.e., for
$\lambda=\lambda_-=-(2-\sqrt3)$.
The condition
$$\frac{|w+1|^2}{|w-1|^2}=2-\sqrt3$$
is equivalent to 
$$2+\sqrt3=\frac{|w-1|^2}{|w+1|^2}
=\frac{|(-w)+1|^2}{|(-w)-1|^2}.$$
Thus, the set of solutions is a circle
homothetic to the previous one.

One can look at this relationship as follows.
Multiplying a circulation by a constant
does not change the set of self-similar configurations,
and moreover $\lambda_+\lambda_-=1$,
therefore the following circulations are equivalent
(they have the same set of self-similar configurations):
$$(1,\lambda_-,\lambda_-,1)
=(1,\lambda_+^{-1},\lambda_+^{-1},1)
\approx(\lambda_+,1,1,\lambda_+).$$

Permutation group of four elements
acts on vortex systems (positions and circulations)
and conserves their dynamics.
In order to obtain self-similar configurations
for the circulation $(1,\lambda_+,\lambda_+,1)$
one can treat the system with a permutation
replacing the first two coordinates
and at the same time the last two coordinates.
Then the system $(w+1,w,1,0)$
will transform into $(w,w+1,0,1)$.
It can be represented by a system
having 1 and 0 on the last two coordinates
if we act with a similarity $z\to 1-z$.
We receive $(-w+1,-w,1,0)$,
which shows that indeed the circle 
transforms into a homothetic circle.

The above remarks about symmetry occurring here
can be further developed.
Circulations of the form $(1,\lambda,\lambda,1)$
admit a four-element invariance group $G=Z_2\times Z_2$.
It acts on self-similar configurations
in such a way that one generator, say $\phi\in G$,
swaps the first and last coordinate of the system,
while the second generator --- $\psi\in G$ --- swaps the other two coordinates.
In order to identify this action on the Novikov circle
we take $u=(w+1,w,1,0)$
and replace the images
$\phi(u)=(0,w,1,w+1)$ 
and $\psi(u)=(w+1,1,w,0)$
with similar systems,
having 1 and 0 on the last coordinates.
The needed similarities are
$z\to\alpha(z)=1-\frac{z-1}{w}$ and
$z\to\beta(z)=\frac{z}{w}$.
Thus
$$[\phi(u)]=[\alpha(\phi(u))]
=\left[\left(\frac{1}{w}+1,\frac{1}{w},1,0\right)\right],$$
and 
$$[\psi(u)]=[\beta(\psi(u))]
=\left[\left(\frac{1}{w}+1,\frac{1}{w},1,0\right)\right].$$
It turns out that in terms of $w$ 
the operation of each of the selected generators of the group $G$
on the Novikov circle is the map $w\to 1/w$.
This is a composition of the symmetry in the $Ox$ axis
(complex conjugation $w\to\bar{w}$) 
and the inversion $w\to w/|w|^2$
with respect to the circle with center $(0,0)$ and radius 1.
In fact, the inversion preserves each of the two circles.
That's because inversions map circles to circles, 
and it is clear that the points of intersection 
of the Novikov circle 
with the $Ox$ axis, i.e., points $(\sqrt3-\sqrt2,0)$
and $(\sqrt3+\sqrt2,0)$ are conjugated,
because $(\sqrt3-\sqrt2)(\sqrt3+\sqrt2)=1$.
In the sequel, it will be interesting to see
if other components of the self-similar configurations set
for the Novikov circulation are also preserved by the group $G$.

Let us also note that the examples of Novikov
concern actually only one (up to permutation and scaling)
circulation $(1,\lambda,\lambda,1)$.
This follows from the fact that scaling the circulation
does not change the location of collapsing systems.
Permuting the circulation coordinates
corresponds to permuting the coordinates
of the self-similar system.

Novikov circulation has two positive and two negative entries.
O'Neil's theorem proves the existence of collapses
only when one of the configurations is linear rotating, 
and this fact is proved by O'Neil
for circulations $\kappa$,
for which one of the entries $\kappa_l$
has sign opposite to the signs of all remaining entries.
This is not satisfied by the Novikov circulation.
Thus the existence of collapses for the Novikov circulation
does not follow from the O'Neil Theorem.
In the next section we shall show that for this circulation,
components other than the one found by Novikov do exist.

\section{New examples of self-similar systems}
The numerical methods proposed above for hunting 
self-similar rotating and collapsing systems
will be applied now
to producing images of the set of such systems 
for the circulation used by Novikov in his examples.
My calculations suggest that for this circulation,
there are other connected components
of the set of self-similar collapsing and rotating systems,
different from the ellipse found by Novikov.
The calculations were performed by searching
zeros of the vector field $U$ as in Def. \ref{pole_u}.

\begin{figure}[ht]
\caption{Projection to the 2-3 plane.}
\includegraphics[width=0.80\textwidth]{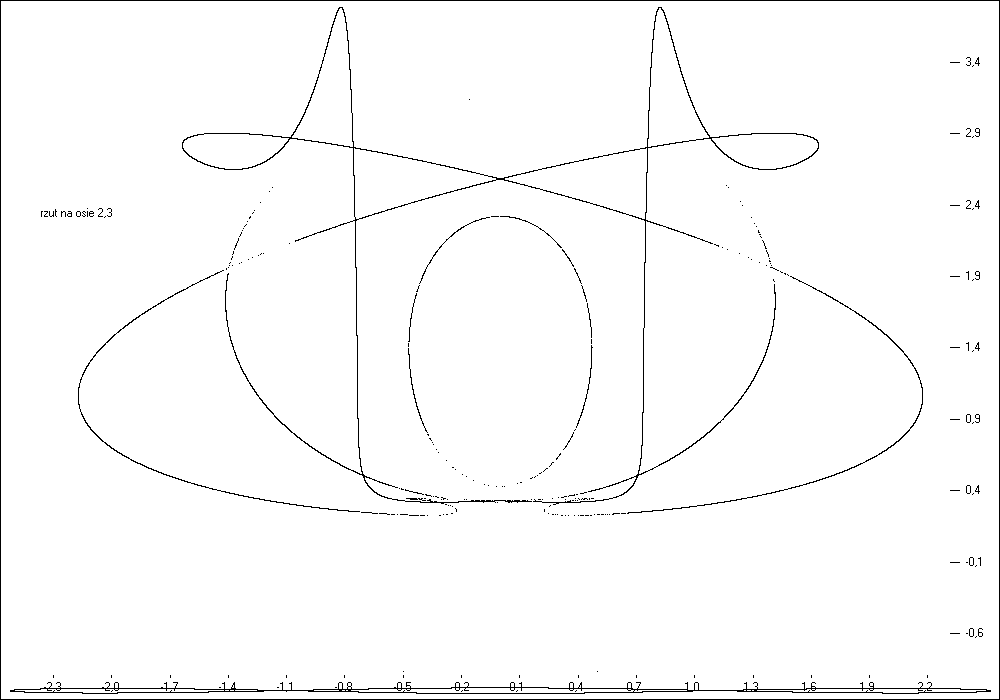}
\end{figure}

\begin{figure}[ht]
\caption{Projection to the 2-4 plane.}
\includegraphics[width=0.80\textwidth]{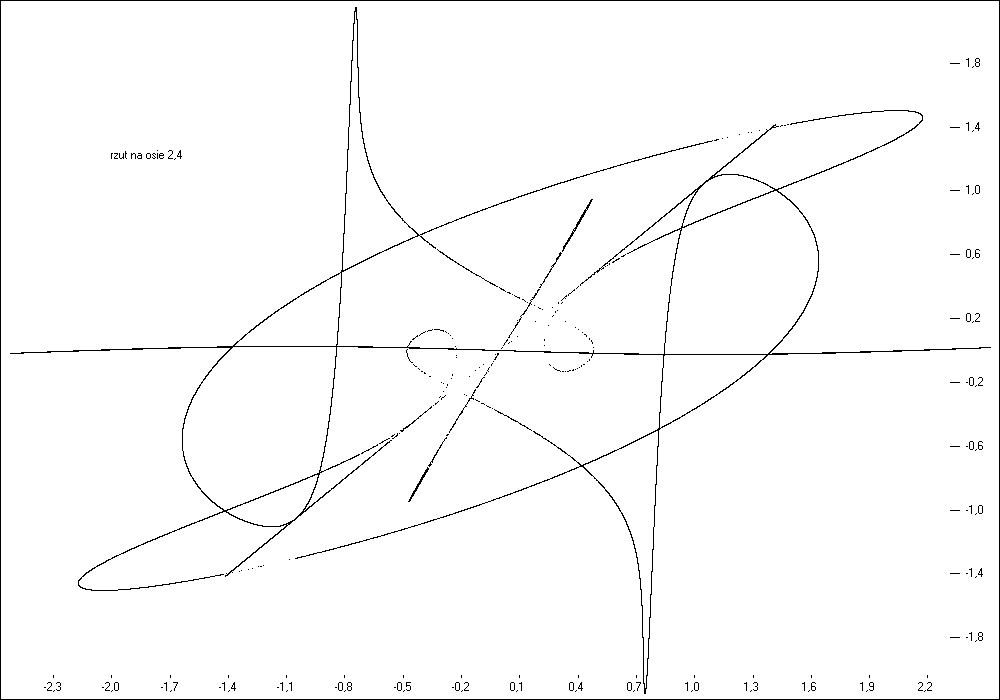}
\end{figure}

\begin{figure}[ht]
\caption{Projection to the 3-4 plane.}
\includegraphics[width=0.80\textwidth]{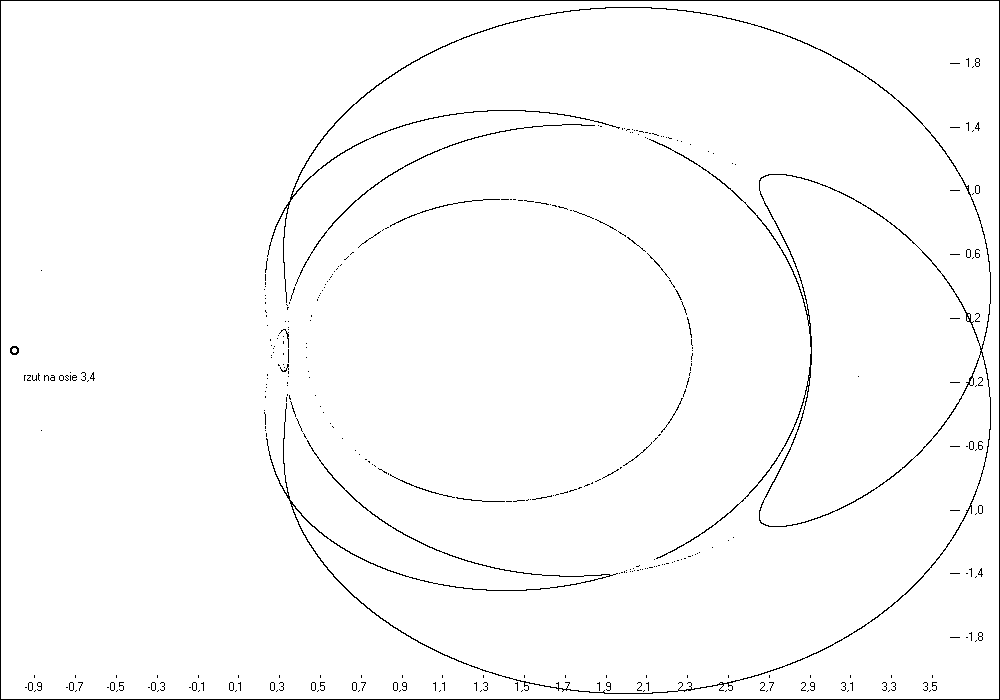}
\end{figure}

Graphic images in Figures 3 to 8
were obtained as described in section \ref{procedura}.
In the four-dimensional cube $[-3.0,+3.0]^4$
we generate random points
and then, using the gradient method,
we lead gradient curves converging to a minimum
of the function $|U(w)|^2$.
If the value found is less than $10^{-8}$,
we believe we found a zero of the vector field $U$,
therefore, a self-similar configuration.
The resulting set of systems is projected 
to the plane coordinates for further analysis.

I present the views of the set $S$ 
of self-similar configurations
projected to various planes of the coordinate system.
The Novikov circle consists of points of $(1+w,w)\in\C^2$,
with $w\in\C$ located on the circle $|w-\sqrt3|^2=2$.
This set can be represented parametrically
$$\theta\to(
1+\sqrt3+\sqrt2\cos\theta,
\sqrt2\sin\theta,
\sqrt3+\sqrt2\cos\theta,
\sqrt2\sin\theta).$$
Thus, the projections to the planes of the coordinate system
will be circles or segments.
In order to easily identify the projection
in the graphic image,
below I list approximate coordinates of the end-points, 
if the projection image is a segment,
and the range of the coordinates, if the projection image is a circle.
\begin{itemize}
	\item $\pi_{12}:\text{ circle },
x_1\in[+1.31784,+4.14626],x_2\in[-1.41421,+1.41421]$
	\item $\pi_{13}:\text{ segment },
L=[+1.31784,+0.317837],P=[+4.14626,+3.14626]$
	\item $\pi_{14}:\text{ circle },
x_1\in[+1.31784,+4.14626],x_4\in[-1.41421,+1.41421]$
	\item $\pi_{23}:\text{ circle },
x_2\in[-1.41421,+1.41421],x_3\in[+0.317837,+3.14626]$
	\item $\pi_{24}:\text{ segment },
L=[-1.41421,-1.41421],P=[+1.41421,+1.41421]$
	\item $\pi_{34}:\text{ circle },
x_3\in[+0.317837,+3.14626],x_4\in[-1.41421,+1.41421]$
\end{itemize}
Here $\pi_{kl}$ means the coordinate system plane
spanned by the axes $1\le k,l\le4$.

In the images presented here,
projections of the five connected 
components can be seen clearly.
Each is a smooth simple closed curve.
Some project to ellipses (or even circles).
The ellipses are flat,
which can be seen since in certain projections 
the images are straight line segments.
Other components are definitely non-planar.

The collection presented here includes about 250,000 points.
Optically one cannot see any significant differences
in comparison to the independently derived
set of only 100 thousand points,
except for a slight improvement in quality.

Some points in the curves escape the region
chosen to be the source of starting points for the procedure.
For example, the maximum of the first coordinate is 3.83.
This exceeds the maximal coordinate of the starting points,
which is 3.0.
This phenomenon is possible because a gradient line
starting in some region may evolve 
and converge even to a distant limit.
Nevertheless, the result is that the limit points are sparse
in some parts of limit curves.
It is advisable to repeat the calculations
with particular emphasis on these areas.

\section{Comparison with O'Neil's results}

The above calculations regarding the collection 
of self-similar systems for the Novikov circulation 
should be compared with the following results of O'Neil \cite{oneil}.
O'Neil proved that collapsing $n$-systems exist for any $n$.
The basis here is Theorem 7.1.1 from p. 411 of his work:
\begin{theorem}
Let $\kappa=(\kappa_1..\kappa_n)$
be a circulation
with one negative entry ($\kappa_n<0$),
and all remaining entries positive: 
$\kappa_l>0$ dla $l=1,\dots,n-1$.
Let us assume that $L=\sum_{l<j}k_lk_j=0$. 
By $s$ denote the number of pairs
$(l,n)$ such that $k_l+k_n>0$.
Then the number of collinear rotating configurations
is at least $s(n-2)!$.
\end{theorem}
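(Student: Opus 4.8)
The plan is to reduce the problem to counting constrained critical points of an explicit potential, and then to exploit the ordering structure of points on a line. Since the configurations sought are collinear, I place all vortices on the real axis, $z_l=x_l\in\R$, so that $V_l=i\sum_{k\neq l}\kappa_k/(x_l-x_k)$ is purely imaginary. A rotating configuration $V_l=i\lambda(z_l-z_0)$ then forces the center $z_0$ to be real, and using translation invariance I normalize $z_0=0$; this is legitimate because $L=0$ gives $(\sum_l\kappa_l)^2=\sum_l\kappa_l^2\neq 0$, so the center of vorticity is well defined. The equations become
$$\sum_{k\neq l}\frac{\kappa_k}{x_l-x_k}=\lambda x_l,\qquad l=1,\dots,n,$$
which are exactly the constrained critical-point equations $\nabla H=\lambda\,\nabla(\tfrac12 I)$ for
$$H(x)=\sum_{l<k}\kappa_l\kappa_k\log|x_l-x_k|,\qquad I(x)=\sum_l\kappa_l x_l^2.$$

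The first structural fact I would record is that the hypothesis $L=0$ makes $H$ invariant under scaling $x\mapsto tx$, since $H(tx)=H(x)+L\log t$; thus $H$ is homogeneous of degree $0$, and it is plainly translation invariant. Euler's relation $\sum_l x_l\,\partial_{x_l}H=L=0$, combined with the equilibrium equations $\partial_{x_l}H=\lambda\kappa_l x_l$, then gives $\lambda\,I(x)=0$, so every collinear rotating configuration lies on the null cone $\{I=0\}$. Here the sign pattern is decisive: because $\kappa_n<0$ while $\kappa_1,\dots,\kappa_{n-1}>0$, the form $I$ is indefinite, of signature $(n-1,1)$, and its null cone is a genuine hypersurface. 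This is precisely what makes collinear rotating configurations possible for the Novikov-type sign pattern. I would therefore recast the theorem as counting the critical points of the scale- and translation-invariant function $H$ on the normalized, collision-free part of $\{I=0\}$, retaining those with multiplier $\lambda\neq 0$.

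Next I would organize the count by the order in which the vortices occur on the line. The symmetric group permutes the open chambers $\{x_{\sigma(1)}<\dots<x_{\sigma(n)}\}$, and the reflection $x\mapsto-x$ identifies a chamber with its reverse, so configurations are counted up to this symmetry. On each chamber the behavior of $H$ at the collision walls is governed by the sign of $\kappa_l\kappa_k$: when two positive vortices collide $H\to-\infty$, whereas when a positive vortex collides with the negative vortex $n$ one has $H\to+\infty$. To produce the asserted lower bound I would exhibit one configuration for each combinatorial type in which the negative vortex sits at an end of the line next to a positive partner $l^{*}$ with $\kappa_{l^{*}}+\kappa_n>0$: there are $s$ admissible partners, and the remaining $n-2$ positive vortices may be arranged on the other side in any of the $(n-2)!$ orders. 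This bookkeeping is exactly what produces the count $s\,(n-2)!$.

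The heart of the matter --- and the step I expect to be the main obstacle --- is proving that each of these $s\,(n-2)!$ types actually carries a critical point of $H$ on $\{I=0\}$ and that the resulting configurations are pairwise distinct. The difficulty is that $\{I=0\}$ is non-compact, since $I$ is indefinite, and that $H\to-\infty$ at the positive-positive walls, so a naive minimization can escape to a collision. My plan is to control this by an inductive insertion (continuation) argument: starting from the negative vortex and its partner, whose net circulation $\kappa_{l^{*}}+\kappa_n>0$ turns the adjacent positive-negative wall into a $+\infty$ barrier and prevents the two-vortex cluster from degenerating, I add the remaining positive vortices one at a time. At the level of the newly added vortex the map $x\mapsto\sum_k\kappa_k/(x-x_k)-\lambda x$ has simple poles of the appropriate sign at each occupied site, so the intermediate value theorem places the new vortex in a prescribed gap; following the chosen order realizes each of the $(n-2)!$ arrangements, and tracking the sign pattern keeps the configurations distinct. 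Showing that the condition $\kappa_{l^{*}}+\kappa_n>0$ is exactly what averts loss of compactness at the $-\infty$ walls, so that the continuation does not terminate in a collision before all vortices are placed, is the technical crux; the factor $s$ then records precisely how many admissible ways there are to initiate the construction.
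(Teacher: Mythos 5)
You should first be aware that the paper contains no proof of this statement at all: it is O'Neil's Theorem 7.1.1, quoted verbatim (with citation to \cite{oneil}) purely to frame the comparison in the final section. So there is no in-paper argument to match yours against, and your proposal has to stand on its own. Much of your framework is sound and is indeed the standard setting for such counts: the reduction to the real-axis system $\sum_{k\neq l}\kappa_k/(x_l-x_k)=\lambda x_l$, the normalization $z_0=0$ justified by $(\sum_l\kappa_l)^2=\sum_l\kappa_l^2>0$ when $L=0$, the virial/Euler identity $\lambda I=L$ forcing rotating collinear configurations onto the cone $\{I=0\}$, the sign analysis of $H$ at the collision walls, and the combinatorial bookkeeping: orderings up to reflection with the negative vortex at an end adjacent to an admissible partner $l^*$ number exactly $s(n-2)!$, consistent with the claimed bound.

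The genuine gap is the existence mechanism, which you yourself flag as the crux but then resolve with a step that does not work as stated. The intermediate-value argument applied to $x\mapsto\sum_k\kappa_k/(x-x_k)-\lambda x$ locates a rest point for the \emph{newly inserted} vortex in the frozen field of the previously placed ones; it does nothing to keep the previously placed vortices in equilibrium, since the new vortex acts back on all of them, so after each insertion the full coupled system of $n$ equations is violated and the induction cannot close. To make an insertion scheme rigorous one would have to continue from zero circulation of the added vortex via the implicit function theorem, which requires a path in circulation space along which the constraint $L=0$ is maintained (appending a vortex destroys $L=0$) and nondegeneracy of every intermediate relative equilibrium --- neither of which your proposal addresses. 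The compactness discussion is likewise only heuristic: because $H\to-\infty$ at positive--positive walls and $H\to+\infty$ at positive--negative walls, the sought critical points are saddles, so no one-sided extremization is available; one needs an actual min-max/linking or topological-degree argument on the non-compact cone $\{I=0\}$ within each chamber, together with a proven (e.g., blow-up/cluster) analysis showing that $\kappa_{l^*}+\kappa_n>0$ excludes escape to collision or to infinity. Until those steps are supplied, the proposal is a plausible strategy with correct scaffolding, not a proof of the $s(n-2)!$ lower bound.
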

Of course, it is easy to find $k_l$ satisfying
these conditions, with $s>0$.
Thus, in this case, collinear rotating configurations do exist.

The existence of collinear rotating configurations
implies the existence of collapsing configurations:
\begin{theorem}
(O'Neil \cite[Theorem 7.4.1, p. 413]{oneil})
Let $n>3$. 
For any circulation
$$\kappa\in K_L^n
=\{k=(k_1,\dots k_n)\in R^n:
k_1=1,\sum k_lk_j=0\}$$
apart from an algebraic subvariety of codimension one, 
each rotating collinear configuration
lies in a one-parameter family
of collapsing configurations.
Each such family is a smooth submanifold
except at finitely many points.
\end{theorem}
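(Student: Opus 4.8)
The plan is to study the collapsing configurations as a locally closed subset of a real‑algebraic variety and to analyze it near the collinear rotating configurations supplied by the preceding theorem. I would start from the self‑similarity condition $V(Z)=\omega(Z-z_0)$ and eliminate $z_0$ by subtracting the equations for different indices, obtaining
$$i\sum_{l}\kappa_l\left(\frac{1}{\overline{z_j-z_l}}-\frac{1}{\overline{z_k-z_l}}\right)=\omega(z_j-z_k)\qquad(1\le j,k\le n),$$
a system $F(Z,\omega)=0$ that is real‑analytic in the real and imaginary parts of $z_1,\dots,z_n,\omega$ and becomes polynomial after multiplying through by the factors $|z_j-z_l|^2$. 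A collapsing configuration is a real solution with $\mathrm{Re}(\omega)\neq0$, taken up to the action of the similarity group $H$; after this quotient the solution set is the algebraic variety whose structure the argument will describe.

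Next I would identify the collinear rotating configurations as the distinguished points from which the collapse families emanate. Normalizing by a rotation so that the supporting line is the real axis, every $z_j$ is real, so each term $1/\overline{(z_j-z_l)}$ is real and the left‑hand side above is purely imaginary; matching it with $\omega(z_j-z_k)$, whose factor $z_j-z_k$ is real, forces $\omega=i\lambda$ with $\lambda\in\R$. Hence a collinear self‑similar configuration is automatically rotating, and the $s\,(n-2)!$ configurations counted in the preceding theorem furnish exactly the base points at which to work.

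The core step is a rank computation for the linearization $DF$ at such a base point. I would show that, for $\kappa$ outside a proper algebraic subset of $K_L^n$, the Jacobian $DF$ has corank exactly one, so that the implicit function theorem yields a smooth one‑parameter family of solutions through the base point and, in particular, establishes that the variety is one‑dimensional at generic points. One then checks transversality: along this family $\mathrm{Re}(\omega)$ becomes non‑zero, so that off the base point the configurations are genuinely collapsing and not merely rotating—equivalently, the family crosses the locus $\{\mathrm{Re}(\omega)=0\}$ transversally at the collinear configuration. The exceptional subset named in the statement is precisely the locus in $K_L^n$ where the appropriate maximal minor of $DF$ (a polynomial in the $\kappa_l$) vanishes or where this transversality degenerates; being the zero set of a single such polynomial, it is algebraic of codimension one.

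The final assertion, that each family is a smooth submanifold except at finitely many points, then follows from the one‑dimensionality just established: a one‑dimensional algebraic variety has a proper closed, hence finite, singular locus, and the exceptional points of each family lie among these singularities. The main obstacle is the rank computation together with the explicit identification of the discriminant: one must evaluate $DF$ at the collinear rotating configuration, prove that it drops rank by exactly one for generic circulation, and extract the polynomial condition on $\kappa$ that carves out the codimension‑one exceptional locus. This is where the detailed dependence on the circulation values, and the interplay between the rotating locus and the collapse side, enter in an essential way, and it is the part that genuinely needs the machinery of algebraic geometry rather than a soft implicit‑function argument.
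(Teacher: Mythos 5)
First, a point of reference: the paper does not prove this statement at all --- it is quoted verbatim from O'Neil \cite[Theorem 7.4.1]{oneil} as background, so your proposal can only be measured against what a genuine proof requires, not against an argument in this text.

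Measured that way, your outline has a genuine gap at its center, and it is not just the rank computation you candidly defer. Even granting that $DF$ has corank exactly one at a collinear rotating configuration, the implicit function theorem does \emph{not} apply at a point where the Jacobian drops rank: a corank-one zero of $F$ could equally well be an isolated point or a transverse crossing, and extracting a smooth one-parameter branch there requires either a Lyapunov--Schmidt reduction or --- what actually happens here --- the identification of a \emph{redundant equation} in the system. The missing idea is the pair of identities
$$\sum_l \kappa_l V_l \equiv 0,
\qquad
\sum_l \kappa_l \bar z_l V_l \equiv iL,$$
the second of which one verifies by pairing terms over $l<j$. Precisely when $\kappa\in K_L^n$ (i.e., $L=0$), these identities make one complex equation of the self-similarity system $V_l=\omega(z_l-z_0)$ dependent on the others, so the expected dimension of the collapse locus jumps from zero to one; this is the structural reason one-parameter families exist at all, and it is what lets one pass to a reduced system of full rank to which the implicit function theorem legitimately applies. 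Your sketch never engages with this mechanism, so the step ``corank one $\Rightarrow$ smooth family'' would fail as written, and the generic-rank claim itself cannot be organized without first removing the identically-dependent equations. Separately, your transversality assertion --- that $\mathrm{Re}(\omega)$ becomes nonzero along the branch, so the family consists of collapses rather than, say, a curve of relative equilibria --- is also only asserted; it needs an argument (e.g., generic finiteness of relative equilibria, which is itself a nontrivial theorem of O'Neil). The final finiteness-of-singularities step via one-dimensional algebraicity is fine, but it rests on the unestablished core.
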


It should be noted that the Novikov circulation is of the form
$(1,\kappa,\kappa,1)$,
thus it does not satisfy the assumptions
of the first O'Neil Theorem.
Therefore, the existence of the Novikov examples
is not a consequence of this assertion.
However, some points in the Novikov circle
are collinear rotating configurations.
Therefore the existence of the Novikov circle is,
in some sense,
a consequence of the second O'Neil Theorem.
On the basis of numerical calculations
one can claim that two of the other four components 
of the self-similar configuration set,
exhibited in this work,
contain no collinear rotating configurations.
Thus, in a sense, their existence is predicted by neither the first,
nor the second O'Neil Theorem.

\bibliographystyle{plain}
\bibliography{novikov}

\begin{thebibliography}{1}

\bibitem{aref2010}
Hassan Aref.
\newblock Self-similar motion of three point vortices.
\newblock {\em Phys. Fluids}, 22:057104, 2010.

\bibitem{aref2011}
Hassan Aref.
\newblock Relative equilibria of point vortices and the fundamental theorem of
  algebra.
\newblock {\em Proc. R. Soc. Lond. Ser. A Math. Phys. Eng. Sci.},
  467(2132):2168--2184, 2011.

\bibitem{garduno}
Antonio Garduno and Ernesto~A. Lacomba.
\newblock Collisions and regularization for the 3-vortex problem.
\newblock {\em J. Math. Fluid Mech.}, 9:75--86, 2007.

\bibitem{grobli}
W.~Groebli.
\newblock Spezielle probleme uber die bewegung geradliniger paralleler wirbel
  fur den.
\newblock {\em Vierteljahrsschr. Natforsch. Ges. Zur.}, 22:129, 1877.

\bibitem{kudela}
H.~Kudela and Z.~Malecha.
\newblock Numeryczne eksperymenty nad dynamik¹ dwuwymiarowych struktur
  wirowych.
\newblock {\em In¿. Chem. Proc.}, 25:2215--2222, 2004.

\bibitem{novikov}
E.~A. Novikov and Yu.~B. Sedov.
\newblock Vortex collapse.
\newblock {\em Sov. Phys. JETP}, 50:297, 1979.

\bibitem{oneil}
Kevin~Anthony O'Neil.
\newblock Stationary configurations of point vortices.
\newblock {\em Trans. Amer. Math. Soc.}, 302(2):383--425, 1987.

\end{thebibliography}

\end{document}